\documentclass{article}

% use Times
\usepackage{times}
% For figures
\usepackage{graphicx} % more modern
\usepackage{subfigure}
\newcommand{\indep}{\rotatebox[origin=c]{90}{$\models$}}

% For citations
\usepackage{natbib}

% For algorithms
\usepackage{amsmath}
\usepackage{amsthm}
\usepackage{amsfonts}
\usepackage{algorithm}
\usepackage{algorithmic}

\newtheorem{lemma}{Lemma}

% As of 2011, we use the hyperref package to produce hyperlinks in the
% resulting PDF.  If this breaks your system, please commend out the
% following usepackage line and replace \usepackage{icml2014} with
% \usepackage[nohyperref]{icml2014} above.
%\usepackage{hyperref}

% Packages hyperref and algorithmic misbehave sometimes.  We can fix
% this with the following command.

% Employ the following version of the ``usepackage'' statement for
% submitting the draft version of the paper for review.  This will set
% the note in the first column to ``Under review.  Do not distribute.''
%\usepackage{icml2014stylefiles/icml2014}
% Employ this version of the ``usepackage'' statement after the paper has
% been accepted, when creating the final version.  This will set the
% note in the first column to ``Proceedings of the...''
\usepackage[accepted]{icml2014}

% The \icmltitle you define below is probably too long as a header.
% Therefore, a short form for the running title is supplied here:
\icmltitlerunning{Understanding Protein Dynamics with $L_1$-Regularized Reversible Hidden Markov Models}

\begin{document}

\twocolumn[
\icmltitle{Understanding Protein Dynamics with\\ $L_1$-Regularized Reversible Hidden Markov Models}

% It is OKAY to include author information, even for blind
% submissions: the style file will automatically remove it for you
% unless you've provided the [accepted] option to the icml2014
% package.
\icmlauthor{Robert T. McGibbon}{rmcgibbo@stanford.edu}
\icmladdress{Department of Chemistry, Stanford University, Stanford CA 94305, USA}
\icmlauthor{Bharath Ramsundar}{rbharath@stanford.edu}
\icmladdress{Department of Computer Science, Stanford University, Stanford CA 94305, USA}
\icmlauthor{Mohammad M. Sultan}{msultan@stanford.edu}
\icmladdress{Department of Chemistry, Stanford University, Stanford CA 94305, USA}
\icmlauthor{Gert Kiss}{gkiss@stanford.edu}
\icmladdress{Department of Chemistry, Stanford University, Stanford CA 94305, USA}
\icmlauthor{Vijay S. Pande}{pande@stanford.edu}
\icmladdress{Department of Chemistry, Stanford University, Stanford CA 94305, USA}

% You may provide any keywords that you
% find helpful for describing your paper; these are used to populate
% the "keywords" metadata in the PDF but will not be shown in the document
\icmlkeywords{protein, dynamics, biological, hmm, markov, lasso}

\vskip 0.3in
]

\begin{abstract}
\label{abstract}
We present a machine learning framework for modeling protein dynamics. Our
approach uses $L_1$-regularized, reversible hidden Markov models to
understand large protein datasets generated via molecular dynamics
simulations. Our model is motivated by three design principles: (1) the
requirement of massive scalability; (2) the need to adhere to relevant
physical law; and (3) the necessity of providing accessible
interpretations, critical for both cellular biology and rational drug
design. We present an EM algorithm for learning and introduce a model
selection criteria based on the physical notion of convergence in
relaxation timescales. We contrast our model with standard methods in
biophysics and demonstrate improved robustness. We implement our algorithm
on GPUs and apply the method to two large protein simulation datasets
generated respectively on the NCSA Bluewaters supercomputer and the
Folding@Home distributed computing network. Our analysis identifies the
conformational dynamics of the ubiquitin protein critical to cellular
signaling, and elucidates the stepwise activation mechanism of the c-Src
kinase protein.
\end{abstract}

\section{Introduction}
\label{introduction}
Protein folding and conformational change are grand challenge problems, relevant to a multitude of human diseases, including Alzheimer's disease, Huntington's disease and cancer. These problems entail the characterization of the process and pathways by which proteins fold to their energetically optimal configuration and the dynamics between multiple long-lived, or ``metastable,'' configurations on the potential energy surface. Proteins are biology's molecular machines; a solution to the folding and conformational change problem would deepen our understanding of the mechanism by which microscopic information in the genome is manifested in the macroscopic phenotype of organisms. Furthermore, an understanding of the structure and dynamics of proteins is increasingly important for the rational design of targeted drugs \cite{doi:10.1146/annurev.pharmtox.43.100901.140216}.

Molecular dynamics (MD) simulations provide a computational microscope by which protein dynamics can be studied with atomic resolution  \cite{dill1995principles}. These simulations entail the forward integration of Newton's equations of motion on a classical potential energy surface. The potential energy functions in use, called forcefields, are semi-emprical approximations to the true quantum mechanical Born-Oppenheimer surface, designed to reproduce experimental observables \cite{Beauchamp2012Protein}. For moderately sized proteins, this computation can involve the propagation of more than a million physical degrees of freedom.  Furthermore, while folding events can take milliseconds ($10^{-3}$ s) or longer, the simulations must be integrated with femtosecond ($10^{-15}$ s) timesteps, requiring the collection of datasets containing trillions of data points.

While the computational burden of performing MD simulations has been a central challenge in the field, significant progress has been achieved recently with the development of three independent technologies: ANTON, a special-purpose supercomputer using a custom ASIC to accelerate MD \cite{Shaw:2007:ASM:1250662.1250664}; Folding@Home, a distributed computing network harnessing the desktop computers of more than 240,000 volunteers \cite{Shirts08122000}; and Google Exacycle, an initiative utilizing the spare cycles on Google's production infrastructure for science \cite{kohlhoff2014cloud}.

The analysis of these massive simulation datasets now represents a major difficulty: how do we turn data into knowledge \cite{lane2012milliseconds}? In contrast to some other machine learning problems, the central goal here is not merely prediction. Instead, we view analysis -- often in the form of probabilistic models generated from MD datasets -- as a tool for generating scientific insight about protein dynamics.

Useful probabilistic models must embody the appropriate physics. The guiding physical paradigm by which chemical dynamics are understood is one of \emph{states} and \emph{rates}. States correspond to metastable regions in the configuration space of the protein and can often be visualized as wells on the potential energy surface. Fluctuations within each metastable state are rapid;  the dominant, long time-scale dynamics can be understood as a jump process moving with various rates between the states. This paradigm motivates probabilistic models based on a discrete-state Markov chain. \emph{A priori}, the location of the metastable states are unknown. As a result, each metastable state should correspond to a latent variable in the model. Hidden Markov models (HMMs) thus provide the natural framework.

Classical mechanics at thermal equilibrium satisfy a symmetry with respect to time: a microscopic process and its time-reversed version obey the same laws of motion. The stochastic analogue of this property is reversibility (also called detailed balance): the equilibrium flux between any two states $X$ and $Y$ is equal in both directions. Probabilistic models which fail to capture this essential property will assign positive probability to systems that violate the second law of thermodynamics \cite{Prinz2011Markov}. Hence, we enforce detailed balance in our HMMs.

In addition to the constraints motivated by adherence to physical laws, suitable probabilistic models should, in broad strokes, incorporate knowledge from prior experimental and theoretical studies of proteins. Numerous studies indicate that only a subset of the degrees of freedom are essential for describing the protein's dominant long time-scale dynamics (see \citet{Cho17012006} and references therein). Furthermore, substantial prior work indicates that protein folding occurs via a sequence of localized shifts \cite{Maity29032005}. Together, these pieces of evidence motivate the imposition of L$_1$-fusion regularization \cite{tibshirani2005sparsity}. The L$_1$ term penalizes deviations amongst states along uninformative degrees of freedom, thereby suppressing their effect on the model. Furthermore, the pairwise structure of the fusion penalty minimizes the number of transitions which involve global changes: many pairs of states will only differ along a reduced subset of the dimensions.

The main results of this paper are the formulation of the L$_1$-regularized reversible HMM and the introduction of a simple and scalable learning algorithm to fit the model. We contrast our approach against standard frameworks for the analysis of MD data and demonstrate improved robustness and physical interpretability.

This paper is organized as follows. Section 2 describes prior work. Section 3 introduces the model and associated learning algorithm. Section 4 applies the model to three systems: a toy double well potential; ubiquitin, a human signaling protein; and c-Src kinase, a critical regulatory protein involved in cancer genesis. Section 5 provides discussion and indicates future directions.

\section{Prior Work}
\label{priorwork}
Earlier studies have applied machine learning techniques to investigate protein structure prediction -- the problem of discovering a protein's energetically optimal configuration -- using CRFs, belief propagation, deep learning, and other general ML methods \cite{sontag2012tightening, di2012deep, chu:2006:BSM:1137243.1137470, baldi:2003:PDL:945365.945379}. But proteins are fundamentally dynamic systems, and none of these approaches offer insight into kinetics; rather, they are concerned with extracting static information about protein structure.

The dominant computational tool for studying protein dynamics is MD. Traditional analyses of MD datasets are primarily visual and non-quantitative. Standard approaches include watching movies of a protein's structural dynamics along simulation trajectories, and inspecting the time evolution of a small number of pre-specified degrees of freedom \cite{Humphrey199633, Karplus10052005}. While these methods have been successfully applied to smaller proteins, they struggle to characterize the dynamics of the large and complex biomolecules critical to biological function. Quantitative methods like PCA can elucidate important (high variance) degrees of freedom, but fail to capture the rich temporal structure in MD datasets.

Markov state models (MSMs) are a simple class of probabilistic models, recently introduced to capture the temporal dynamics of the folding process. In an MSM, protein dynamics are modeled by the evolution of a Markov chain on a discrete state space. The finite set of states is generated by clustering the set of configurations in the MD trajectories \cite{doi:10.1021/ct200463m}. MSMs can be viewed as fully observable HMMs. More recently, HMMs with multinomial emission distributions have been employed on this discrete state space \cite{:/content/aip/journal/jcp/139/18/10.1063/1.4828816}.

Although MSMs have had a number of notable successes \cite{doi:10.1021/ja302528z, Sadiq26112012}, they are brittle and complex.  Traditional MSMs lack complete data likelihood functions, and learning cannot be easily characterized by a single optimization problem. For these reasons, MSM learning requires significant manual tuning. For example, because clustering is purely a preprocessing step, the likelihood function contains no guidance on the choice of the metastable states. Moreover, the lack of uncertainty in the observation model necessitates the introduction of a very large number of states, typically more than ten thousand, in order to cover the protein's phase space at sufficient resolution. This abundance of states is statistically inefficient, as millions of pairwise transition parameters must be estimated in typically-sized models, and renders interpretation of learned MSMs challenging.

\section{Fusion $L_1$-Regularized Reversible HMM}
\label{model}
We introduce the $L_1$-regularized reversible HMM with Gaussian emissions, a generative probabilistic model over multivariate discrete-time continuous-space time series. As discussed in Section 1, we integrate necessary physical constraints on top of the core hidden Markov model \cite{rabiner1986introduction}.

Let $\{Y_t\}$ be the observed time series in $\mathbb{R}^D$ of length $T$ (\textit{i.e.}, the input simulation data), and let $\{X_t\}$ be the
corresponding latent time series in $\{1, \ldots, K\}$, where $K$ is a hyperparameter indicating the number of hidden states in the model. Each
hidden variable $x_t$ corresponds to a metastable state of the physical system. The emission distribution given $X_t=k$ is a multivariate normal distribution parameterized by mean $\mu_k \in \mathbb{R}^D$ and diagonal covariance matrix $\operatorname{Diag}(\sigma^2_k) \in \mathbb{R}^{D\cdot D}$ (where $\sigma^2_k \in \mathbb{R}^D$ is the vector of diagonal covariance elements). We use the notation $(\mu_{k})_{j}$ to indicate the $j$th element of the vector $\mu_{k}$.

Controlling the means $\{ \mu_k \}$ is critical for achieving physically interpretable models. As discussed in Section 1, we wish to minimize the differences between $\mu_k$ and $\mu_{k'}$ to the extent possible. Consequently, we place a fusion $L_1$ penalty \cite{tibshirani2005sparsity} on our log likelihood function, which adds the following pairwise cost:
\begin{align*}
\lambda \sum_{k,k'} \sum_{j} \tau^{(j)}_{k,k'} \left| (\mu_{k})_{j} -
(\mu_{k'})_{j} \right|.
\end{align*}
Here, $\lambda$ governs the overall strength of the penalty, while the adaptive fusion weights, $\{ \tau^{(j)}_{k,k'} \}$, control the contribution from each pair of states \cite{guo2010pairwise}. During learning, the adaptive fusion weights are computed as
\begin{align*}
\tau^{(j)}_{k,k'} = | (\tilde{\mu}_{k})_{j} - (\tilde{\mu}_{k'})_{j} |^{-1},
\end{align*}
where the $\{\tilde{\mu}_{k}\}$ are the learned metastable state means in the absence of the penalty. The intuition motivating the adaptive strength of the penalty is that if degree of freedom $j$ is informative for separating states $k$ and $k'$, the corresponding fusion penalty should be applied lightly.

The reversible time evolution of the model is parameterized by an irreducible, aperiodic, row-normalized $K$ by $K$ stochastic matrix $\mathbf{T}$, which satisfies detailed balance. Mathematically, the detailed balance constraint is
\begin{align*}
\forall k, k',\ \pi_k \mathbf{T}_{k, k'} &= \pi_{k'}\mathbf{T}_{k',k},
\end{align*}
where row vector $\pi$ is the stationary distribution of $\mathbf{T}$. The stationary distribution $\pi$ also parameterizes the initial distribution over the metastable states. By the Perron--Frobenius theorem, $\pi$ is the dominant left eigenvector of $\mathbf{T}$ with eigenvalue 1 and is not an independent parameter in this model.

The initial distributions and evolution of $\{X_t, Y_t\}$ satisfy the following equations:
\begin{align*}
    X_0 &\sim \sum_{k=1}^{K}\pi_k \delta_k, \\
    X_{t+1} &\sim \sum_{k=1}^{K} \mathbf{T}_{X_t, k} \, \delta_k, \\
    Y_{t} &\sim \mathcal{N}(\mu_{X_t}, \sigma^2_{X_t}).
\end{align*}
The complete data likelihood $\{x_t, y_t\}$ is
\begin{align*}
&\mathcal{L}(\{x_t\}, \{y_t\} | \mathbf{T}, \mu, \sigma) = \\
&\qquad \pi_{x_0} \prod_{t=1}^{T-1} \mathbf{T}_{x_{t-1}, x_t} \prod_{t=0}^{T-1} \mathcal{N}(y_t; \mu_{x_t}, \sigma^2_{x_t}).
\end{align*}

The hyperparameter $\Delta$ controls the discretization interval at which a protein's coordinates are sampled to obtain $\{y_t\}$. In the absence of downsampling by $\Delta$, subsequent samples $y_t, y_{t+1}$ would be highly correlated. On the other hand, subsequent samples from an HMM are conditionally independent given the hidden state. Choice of $\Delta$ large enough recovers this conditional independence (\emph{vide infra}).

\subsection{Learning}
The model is fit using expectation-maximization. The E-step is standard, while the M-step requires modification to enforce the detailed balance constraint on $\mathbf{T}$ and the adaptive fusion penalty on the $\{ \mu_k\}$.

\subsubsection{E-step}
Inference is identical to that for the standard HMM, using the forward-backward algorithm \cite{rabiner1986introduction} to compute the following quantities:
\begin{align*}
\gamma_i(t) &= \mathbb{P}(X_t = i| \{ y_t \}),\\
\xi_{ij}(t) &= \mathbb{P}(X_t = i, X_{t+1} = j\mid \{ y_t \}).
\end{align*}

\subsubsection{M-step}
Both the penalty on $\{\mu_k\}$ and the reversibility constraint affect only the M-step. The M-step update to the means in the $t$-th iteration of EM consists of maximizing the penalized log-likelihood function
\begin{align*}
\mu_k^{(t+1)} = \underset{\mu_k}{\operatorname{argmin}} & \sum_i^N
\sum_k^K \gamma_k(i) \frac{(x_i - \mu_k)^2}{2(\sigma_k^2)^{(t)}} \\
 &+ \lambda \sum_{k, k'} \sum_{j} \tau_{k,k'}^{(j)} \left| \mu_{k,j} -
 \mu_{k',j} \right|.
\end{align*}
The $\{\mu_k\}$ update is a quadratic program, which can be solved by a variety of methods. We compute $\{\mu_k\}$ by iterated ridge regression. Following \citet{guo2010pairwise} and \citet{fan2001variable}, we use the local quadratic approximation
\begin{align*}
& \left|\ \mu_{k,j}^{(t, s+1)} - \mu_{k',j}^{(t, s+1)} \right| \approx \\
&\hspace{5em} \frac{\left(\mu_{k,j}^{(t, s+1)} - \mu_{k',j}^{(t, s+1)}\right)^2}{2 \, \left|\,
\mu_{k,j}^{(t, s)} - \mu_{k',j}^{(t, s)} \,\right|} +  \frac{1}{2} \left|\,
\mu_{k,j}^{(t, s)} - \mu_{k',j}^{(t, s)} \right|.
\end{align*}
where $s$ is the iteration index for this procedure within the $t$-th M-step. This approximation is based on the identity
\begin{align*}
|x-y| = \frac{(x-y)^2}{2\,|x-y|} + \frac{1}{2} |x-y|.
\end{align*}
Under the approximation, we obtain a generalized ridge regression problem which can be solved in closed form during each iteration $s$. Note that this approximation is only valid when $| \mu^{(t, s)}_{k,j} - \mu^{(t, s)}_{k',j} | > 0$. For numerical stability, we threshold $| \mu^{(t, s)}_{k,j} - \mu^{(t, s)}_{k',j} | $ to zero at values less than $10^{-10}$.

The variance update is standard:
\begin{align*}
\sigma_k^2 = \frac{\sum_t \gamma_k(t) (y_t - \mu_k)^T(y_t - \mu_k)}{\sum_t \gamma_k(t)}.
\end{align*}
The transition matrix update is
\begin{align*}
\mathbf{T} = \arg\max_{\mathbf{T}} \sum_{ij} \log(\mathbf{T}_{ij}) \sum_t
\xi_{ij}(t).
\end{align*}
Because the Gaussian emission distributions have infinite support, $\mathbf{T}$ is irreducible and aperiodic by construction. However, we must explicitly constrain $\mathbf{T}$ to satisfy detailed balance.
\begin{lemma}
    $\mathbf{T}$ satisfies detailed balance if and only if $\mathbf{T}_{ij} = \frac{\mathbf{W}_{ij}}{\sum_k \mathbf{W}_{ik}}$, where $\mathbf{W} = \mathbf{W}^T$.
\end{lemma}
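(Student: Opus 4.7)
The plan is to prove both directions by exhibiting an explicit bijection between reversible stochastic matrices $\mathbf{T}$ and (nonnegative) symmetric matrices $\mathbf{W}$ with positive row sums, given by $\mathbf{W}_{ij} = \pi_i \mathbf{T}_{ij}$ in one direction and $\mathbf{T}_{ij} = \mathbf{W}_{ij}/\sum_k \mathbf{W}_{ik}$ in the other. Both directions are essentially an algebraic rearrangement of the detailed balance condition $\pi_i \mathbf{T}_{ij} = \pi_j \mathbf{T}_{ji}$, so no deep idea is required; the work is in being careful about what plays the role of $\pi$ in each direction.

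For the forward direction, I assume $\mathbf{T}$ is a row-stochastic, irreducible, aperiodic matrix satisfying detailed balance with stationary distribution $\pi$, and I set $\mathbf{W}_{ij} := \pi_i \mathbf{T}_{ij}$. Symmetry $\mathbf{W}_{ij} = \mathbf{W}_{ji}$ is immediate from detailed balance. Row sums recover $\pi$: $\sum_k \mathbf{W}_{ik} = \pi_i \sum_k \mathbf{T}_{ik} = \pi_i$, so dividing gives back $\mathbf{T}_{ij}$, as required.

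For the reverse direction, I start from an arbitrary symmetric $\mathbf{W}$ with strictly positive row sums (needed so the normalization is well-defined) and define $\mathbf{T}_{ij} := \mathbf{W}_{ij}/\sum_k \mathbf{W}_{ik}$. Row-stochasticity is immediate. I then guess the stationary distribution to be $\pi_i := \sum_k \mathbf{W}_{ik}/Z$ with $Z := \sum_{i,k} \mathbf{W}_{ik}$, and verify two things: (i) detailed balance, by plugging in to get $\pi_i \mathbf{T}_{ij} = \mathbf{W}_{ij}/Z$, which is symmetric in $(i,j)$ by symmetry of $\mathbf{W}$; and (ii) stationarity, which actually follows from detailed balance by summing over $i$: $\sum_i \pi_i \mathbf{T}_{ij} = \sum_i \pi_j \mathbf{T}_{ji} = \pi_j$.

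The only subtlety, and the place I would be most careful, is the implicit nonnegativity/positivity constraint on $\mathbf{W}$: to make $\mathbf{T}$ a valid stochastic matrix the entries of $\mathbf{W}$ must be nonnegative, and row sums must be positive; to make $\mathbf{T}$ irreducible and aperiodic (as required in the model), further structural conditions on $\mathbf{W}$ are needed, but the lemma as stated only concerns the detailed balance property, so I would note these caveats parenthetically rather than develop them. Once that is pinned down, the proof reduces to the short algebraic manipulation sketched above.
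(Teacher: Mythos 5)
Your proof is correct and takes essentially the same route as the paper's: defining $\mathbf{W}_{ij}=\pi_i\mathbf{T}_{ij}$ in the forward direction and taking $\pi$ proportional to the row sums of $\mathbf{W}$ in the converse. The extra care you take (normalizing $\pi$, verifying stationarity by summing the detailed balance relation, and flagging the nonnegativity/positive-row-sum caveats) is a minor refinement of the paper's argument rather than a different approach.
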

\begin{proof}
If $\mathbf{T}$ satisfies detailed balance, then let $\mathbf{W}_{ij} = \pi_i \mathbf{T}_{ij} =  \pi_{j}\mathbf{T}_{ji} = \mathbf{W}_{ji}$. Then note
\begin{align*}
\frac{\mathbf{W}_{ij}}{\sum_k \mathbf{W}_{ik}} = \frac{\pi_i \mathbf{T}_{ij}}{\sum_k \pi_i \mathbf{T}_{ik}}
= \frac{\mathbf{T}_{ij}}{\sum_k \mathbf{T}_{ik}} = \mathbf{T}_{ij}
\end{align*}
To prove the converse, assume $\mathbf{T}_{ij} = \frac{\mathbf{W}_{ij}}{\sum_k \mathbf{W}_{ik}}$, with $\mathbf{W} = \mathbf{W}^T$. Let $\pi_i = \sum_k \mathbf{W}_{ik}$. Then $\pi_i \mathbf{T}_{ij} = \mathbf{W}_{ij} = \mathbf{W}_{ji} = \pi_j \mathbf{T}_{ji}$.
\end{proof}

Substituting the results of Lemma 1, we rewrite the transition matrix update as
\begin{align*}
\mathbf{W} = \arg\max_{\mathbf{W}} \Big(\Big[ \sum_{ij} \log(\mathbf{W}_{ij}) - \log \pi_i \Big] \sum_t \xi_{ij}(t) \Big).
\end{align*}
We compute the derivative of the inner term with respect to $\log \mathbf{W}_{ij}$ and optimize with L-BFGS \cite{nocedal2006numerical}.

\subsection{Model Selection}
\label{subsec:model_selection}
There are two free model parameters: $K$ and $\Delta$. The number of metastable states, $K$, is expected to be small -- at most a few dozen. To choose $K$, we can use the AIC or BIC selection criteria, or alternatively enumerate a few small values.

The choice of $\Delta$ is more difficult than the choice of $K$, as changing the discretization interval alters the support of the likelihood function. Recall that choosing $\Delta$ too small results in subsequent samples $y_t, y_{t+1}$ becoming highly correlated, while the model satisfies the conditional independence assumption $Y_{t} \,\indep\, Y_{t+1} \,|\, X_t$. Moreover, small $\Delta$ increases data-storage requirements, while $\Delta$ too large will needlessly discard data. Thus a balance between these two conflicting directives is necessary.

We use the physical criterion of convergence in the relaxation timescales to evaluate when $\Delta$ is large enough. The propagation of the dynamics from an initial distribution over the hidden states, $X_t$, can be described by
\begin{align*}
P(y_{t+n} \,|\, x_t) = \sum_{k=0}^{K-1} \mathcal{N}(y_{t+n}; \mu_k,
\sigma^2_k) \, \left(x_t^T \, \mathbf{T}^n \right)_k. 
\end{align*}
Diagonalize $\mathbf{T}$ in terms of its left eigenvectors $\phi_i$, right eigenvectors $\psi_i$, and eigenvalues $\lambda_i$ (such a diagonalization is always possible since $\mathbf{T}$ is a stochastic matrix).
\begin{align*}
P(y_{t+n} \,|\, x_t) = \sum_{k=1}^{K} \left[ \mathcal{N}(y_{t+n}; \mu_k, \sigma^2_k) \left( \sum_{i=1}^{K} \lambda_i^n \, \langle x_t, \psi_i \rangle \, \phi_i \right)_k \right]
\end{align*}
Since $\pi$ is the stationary left eigenvector of $\mathbf{T}$, and the remaining eigenvalues lie in the interval $-1 < \lambda_i < 1$, the collective dynamics can be interpreted as a sum of exponential relaxation processes.
\begin{align*}
P(y_{t+n} \,|\, x_t) = \sum_{k=1}^{K} \left[ f_k(y_{t+n}) \! \left( \pi + \sum_{i=2}^{K} e^{-n/\tau_i} \, \langle x_t, \psi_i \rangle \, \phi_i \right)_k \right]
\end{align*}
In the equation, we define $f_k(y) = \mathcal{N}(y; \mu_k, \sigma^2_k)$. Each eigenvector of $\mathbf{T}$ (except the first) describes a dynamical mode with characteristic relaxation timescale
\begin{align*}
\tau_i=-\frac{1}{\ln \lambda_i}.
\end{align*}
The longest timescales, $\tau_i$, are of central interest from a molecular modeling perspective because they describe dynamical modes visible in time-resolved protein experiments \cite{doi:10.1021/jp109592b} and are robust against perturbations \cite{weber2012protein}. We choose $\Delta$ large enough to converge the $\tau_i$: for adequately large $\Delta$, we expect $\tau_i(\Delta)$ to asymptotically converge to the true relaxation timescale $\tau_i^*$. For simple systems, we may evaluate $\tau_i^*$ explicitly, while for larger systems, we choose $\Delta$ large enough so that $\tau_i(\Delta)$ no longer changes with further increase in the discretization interval. 

\subsection{Implementation}
We implement learning for both multithreaded CPU and NVIDIA GPU platforms. In the CPU implementation, we parallelize across trajectories during the E-step using OpenMP. The largest portion of the run time is spent in {\sc log-sum-exp} operations, which we manually vectorize with SSE2 intrinsics for SIMD architectures. Parallelism on the GPU is more fine grained. The E-step populates two $T \times K \times K$ arrays with forward and backwards sweeps respectively. To fully utilize the GPU's massive parallelism, each trajectory has a team of threads which cooperate on updating the $K \times K$ matrix at each time step. Specialized CUDA kernels were written for $K=4, 8, 16$ and 32 along with a generic kernel for $K>32$.

Even in log space, for long trajectories, the forward-backward algorithm can suffer from an accumulation of floating point errors which lead to catastrophic cancelation during the computation of $\gamma_i(t)$. This risk requires that the forward-backward matrices be accumulated in double precision, whereas the rest of the calculation is safe in single precision.

The speedup using our GPU implementation is $15\times$ compared to our optimized CPU implementation and $75\times$ with respect to a standard numpy implementation using $K=16$ states on a NVIDIA GTX TITAN GPU / Intel Core i7 4 core Sandy Bridge CPU platform. Further scaling of the implementation could be achieved by splitting the computation over multiple GPUs with MPI.

\section{Experiments}
\label{experiments}
\begin{figure}
\includegraphics[width=3.4in]{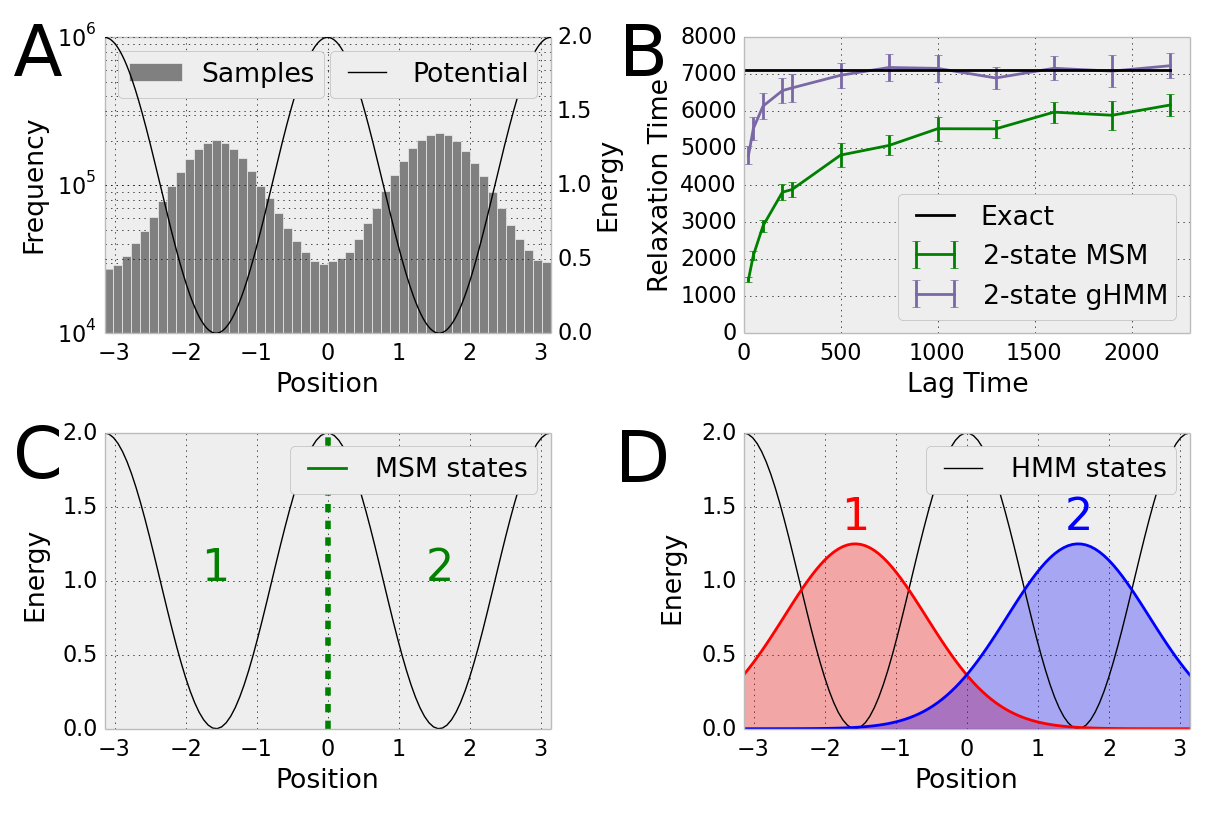}
\caption{Simulations of Brownian dynamics on a double well potential ({\bf A}) illustrate the advantages of the HMM over the MSM. When the dynamics are discretized at a time interval of $>500$ steps, the 2-state HMM, unlike the 2-state MSM achieves a quantitatively accurate prediction of the first relaxation timescale ({\bf B}). The MSM ({\bf C}) features hard cutoffs between the states wheres the HMM ({\bf D}) each have infinite support.}
\end{figure}

\subsection{Double Well Potential}
We first consider a one-dimensional diffusion process $y_t$ governed by Brownian dynamics. The process is described by the stochastic differential equation
\begin{align*}
\frac{dy_t}{dt} = -\nabla V(y_t) + \sqrt{2D} R(t)
\end{align*}
where V is the reduced potential energy, $D$ is the diffusion constant, and $R(t)$ is a zero-mean delta-correlated stationary Gaussian process. For simplicity, we set $D=1$ and consider the double well potential
\begin{align*}
V(y) = 1 + \cos (2y)
\end{align*}
with reflecting boundary conditions at $y=-\pi$ and $y=\pi$. Using the Euler-Maruyama method and a time step of $\Delta t=10^{-3}$, we produced ten simulation trajectories of length $5 \times 10^5$ steps each. The histogrammed trajectories are shown in Fig. 1({\bf A}). The exact value of the first relaxation timescale was computed by a finite element discretization of the corresponding Fokker-Planck equation \cite{higham2001algorithmic}.

We applied both a two-state MSM and two-state HMM, with fusion $L_1$ regularization parameter $\lambda=0$, to the simulation trajectories. The MSM states were fixed, with a dividing surface at $y=0$, as shown in Fig. 1({\bf C}). The HMM states were learned, as shown in Fig. 1({\bf D}). Both the MSM and the HMM display some sensitivity with respect to the discretization interval, with more accurate predictions of the relaxation timescale at longer lag times. 

The two-state MSM is unable to accurately learn the longest timescale, $\tau_1$, even with large lag times, while the two-state HMM succeeds in identifying $\tau_1$ with $\Delta \geq 500$ Fig. 1({\bf B}).

\subsection{Ubiquitin}
\begin{figure}[h]
\centering
\includegraphics[width=2.25in]{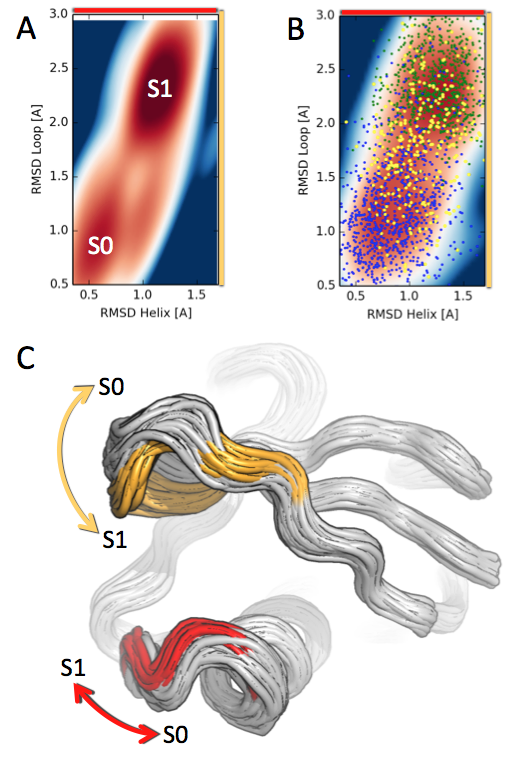}
\caption{Dynamics of Ub. {\bf(A)} The HMM identifies the two metastable states of Ub, varying primarily in the loop and helix regions (axes in yellow and red respectively). {\bf(B)} The MSM fails to cleanly separate the two underlying physical states. Three post-processed macrostates from the MSM are shown (in blue, green, and yellow). {\bf(C)} A structural rendering of the conformational states of the Ub system. S0, shown in grey, binds to the UCH family of proteins, and S1 (with characteristic structural differences to S0 in red and yellow) binds to the USP family.}
\label{fig:Ub}
\end{figure}
Ubiquitin (Ub) is a regulatory hub protein at the intersection of many signaling pathways in the human body \cite{hershko1998ubiquitin}. Among its many tasks are the regulation of inflammation, repair of DNA, and the breakdown and recycling of waste proteins. Ubiquitin interacts with close to 5000 human signaling proteins. Understanding the link between structure and function in ubiquitin would elucidate the underlying framework of the human signaling network.  

We obtained a dataset of MD simulations of human Ub consisting of 3.5 million data points. The protein, shown in Fig. \ref{fig:Ub}, is composed of 75 amino acids. The simulations were performed on the NCSA Blue Waters supercomputer. The resulting structures were featurized by extracting the distance from each amino acid's central carbon atom to its position in the simulations' starting configurations. HMMs were constructed with 2 to 6 states. We chose $\Delta$ by monitoring the convergence of the relaxation timescales as discussed in Sec.  \ref{subsec:model_selection}, and set the $L_1$ fusion penalty heuristically to a default value of $\lambda=0.01$.
In agreement with existing biophysical data \cite{zhang2012conformational}, the HMMs correctly determined that Ub was best modeled with 2 states (Fig. \ref{fig:Ub}{\bf A}). For ease of representation, the learned HMM is shown projected onto two critical degrees of freedom (discussed below).

For comparison, we generated MSM models with 500 microstates (Fig. \ref{fig:Ub}{\bf B}) and projected upon the same critical degrees of freedom. We used a standard kinetic lumping post-processing step to identify 3 macrostates (shown in green, blue, and yellow respectively); the lumping algorithm collapsed when asked to identify 2 macrostates \cite{Bowman2012Improved}. Contrast the simple, clean output of the 2 state HMM in Fig. \ref{fig:Ub}({\bf A}) with the standard MSM of Fig. \ref{fig:Ub}({\bf B}). Note how significant post-processing and manual tuning would be required to piece together the true two-state structural dynamics of Ub from the MSM output.

We display a structural rendering of the Ub system in Fig. \ref{fig:Ub}({\bf C}). The imposed $L_1$ penalty of the HMM suppresses differences among the uninformative degrees of freedom depicted in grey. The remaining portions of the protein (shown in color) reveal the two critical axes of motion of the Ub system: the hinge dynamics of the loop region displayed in yellow and a kink in the lower helix shown in red. We use these axes in the simplified representations shown in Figs. \ref{fig:Ub}({\bf A,B}). 

The states S0 and S1 identified by the HMM have direct biological interpretations. Comparison to earlier experimental work reveals that configuration S0 binds to the UCH family of proteins, while configuration S1 binds to the USP family instead \cite{komander2009breaking}. The families play differing roles in the vital task of regenerating active Ub for the cell-signaling cycle.

Together, MD and the HMM analysis provide atomic insight into the effect of protein structure on ubiquitin's role in the signaling network. Our analysis approach may have significant value for protein biology and for the further study  of cellular signaling networks.  Although experimental studies of protein signaling provide the gold standard for hard data, they struggle to provide structural explanations --- knowing why a certain protein is more suited for certain signaling functions is challenging at best. In contrast, the MD/HMM approach can provide a direct link between structure and function and give a causal basis for observed protein activity.

\subsection{c-Src Tyrosine Kinase}

Protein kinases are a family of enzymes that are critical for regulating cellular growth whose aberrant activation can lead to uncontrolled cellular proliferation. Because of their central role in cell proliferation, kinases are a critical target for anti-cancer therapeutics. The c-Src tyrosine kinase is a prominent member of this family that has been implicated in numerous human malignancies \cite{pmid20689754}.

Due to the protein's size and complexity, performing MD simulations of the c-Src kinase is a formidable task. The protein, shown in Fig.  \ref{fig:kinase}{\bf A}, consists of 262 amino acids; when surrounding water molecules -- necessary for accurate simulation -- are taken into account, the system has over 40,000 atoms. Furthermore, transition between the active and inactive states takes hundred of microseconds. Adequate sampling of these processes therefore requires hundreds of billions of MD integrator steps. Simulations of the c-Src kinase were performed on the Folding@Home distributed computing network, collecting a dataset of 4.7 million configurations from 550 $\mu s$ of sampling, for a total of 108 GB of data \cite{Shukla2014Activation}. 

In order to understand the molecular activation mechanism of the c-Src kinase, we analyzed this dataset using the $L_1$ regularized reversible HMM. We featurized the configurations by extracting the distance from each amino acid's central carbon atom to its position in an experimentally determined inactive configuration. We built HMMs with 2 to 6 states, and singled out the 3 state model for achieving a balance of complexity and interpretability. As with Ub, we chose $\Delta$ by monitoring the convergence of the relaxation timescales, same default $L_1$ fusion penalty of $\lambda=0.01$.

The $L_1$-regularized reversible HMM elucidates the c-Src kinase activation pathway, revealing a stepwise mechanism of the dynamics. A projection of the learned HMM states onto two key degrees of freedom is shown in Fig. \ref{fig:kinase}{\bf B}. Fig. \ref{fig:kinase}{\bf C} shows a structural representation of the means of the three states, highlighting a sequential activation mechanism. The transformation from the inactive to the intermediate state occurs first by the unfolding of the A-loop (the subsection of the protein highlighted in red). Activation is completed by the inward rotation of the C-helix (highlighted in orange) and rupture of a critical side chain interaction between two amino acids on the C-helix and the A-loop respectively.

Although the protein structure is complex, the activation process takes place only in a small portion of the overall protein; the random fluctuations of the remaining degrees of freedom are largely uncoupled from the activation process. As with Ub, the $L_1$ penalty suppresses the signal from unimportant degrees of freedom shown in grey. In contrast to the simplicity of HMM approach, a recent MSM analysis of this dataset found similar results, but required 2,000 microstastates and significant post-processing of the models to generate physical insight into the activation mechanism \cite{Shukla2014Activation}.

The identification of the intermediate state along the activation pathway has substantial implications in the field of rational drug design. Chemotherapy drugs often have harmful side effects because they target portions of proteins that are common across entire families, interfering with both the uncontrolled behavior of tumor proteins as well as the critical cellular function of healthy proteins. Intermediate states, such as the one identified by the HMM, are more likely to be unique to each kinase protein; future therapeutics that target these intermediate states could have significantly fewer deleterious side effects \cite{doi:10.1021/cb300663j}.

\begin{figure}
\centering
\includegraphics[width=3.2in]{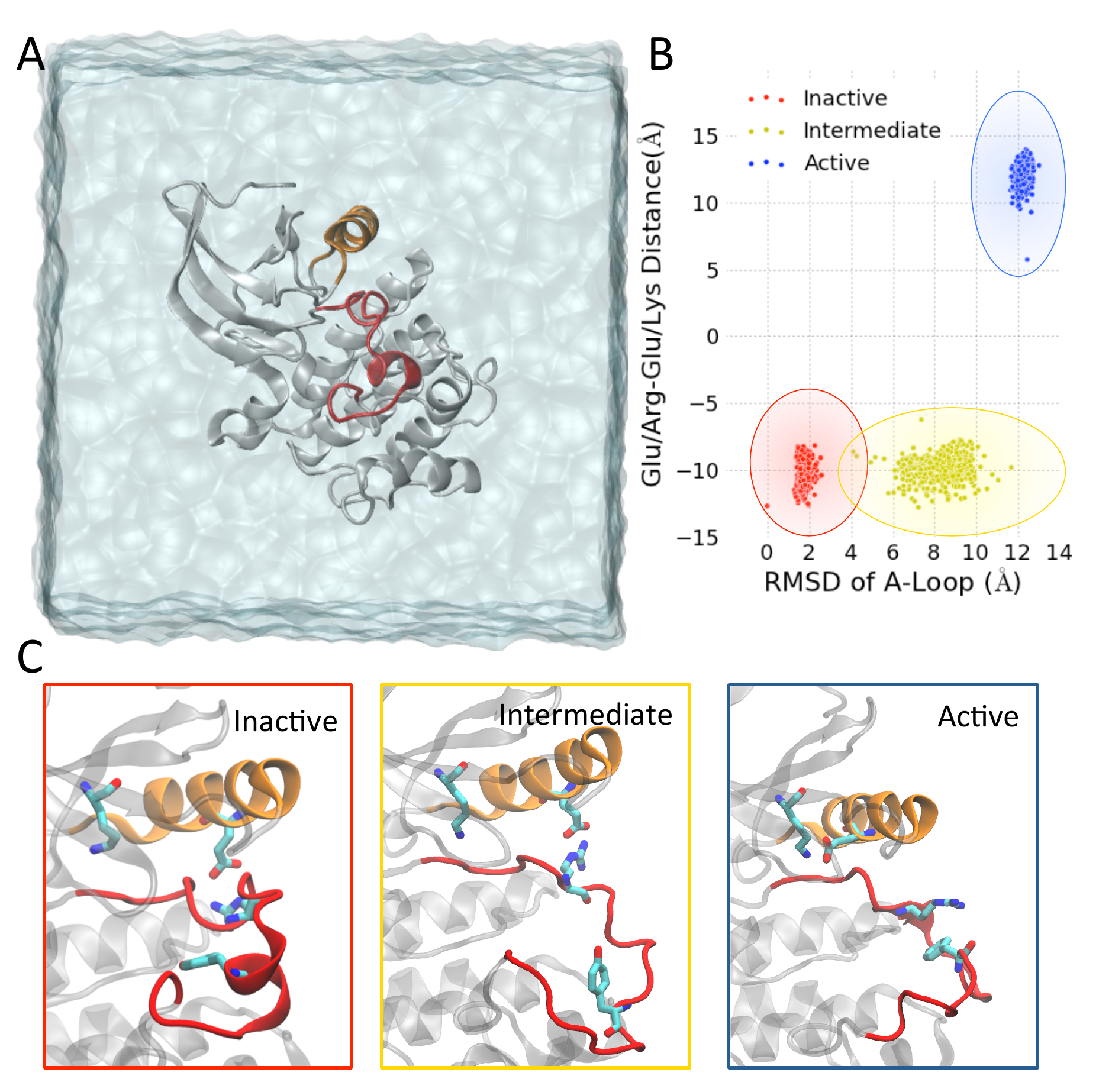}
\caption{Activation of the c-Src Kinase. ({\bf A}) Structure of the protein system. ({\bf B}) The 3 state HMM, projected onto two degrees of freedom representing the positions of the A-loop (shown in red) and C-helix (shown in orange) respectively. ({\bf C}) Structural renderings of the means of the hidden states showing atomistic details of the activation pathway.}
\label{fig:kinase}
\end{figure}

\section{Discussion and Conclusion}
\label{discussion}
Currently, MSMs are a dominant framework for analyzing protein dynamics datasets. We propose replacing this methodology with $L_1$-regularized reversible HMMs. We show that HMMs have significant advantages over MSMs: whereas the MSM state decomposition is a prepreprocessing procedure without guidance from a complete-data likelihood function, the HMM couples the identification of metastable states with the estimation of transition probabilities. As such, accurate models require fewer states, aiding interpretability from a physical perspective.

The switch is not without tradeoffs. MSMs are backed by a significant body of theoretical work: the MSM is a direct discretization of an integral operator which formally controls the long timescale dynamics known as the transfer operator. This connection enables the quantification of approximation error in the MSM framework \cite{Prinz2011Markov}. No such theoretical guarantees yet exist for the $L_1$-regularized reversible HMM because the evolution of $Y_t$ is no longer unconditionally Markovian. However, because the HMM can be viewed as a generalized hidden MSM, there is reason to believe that analogues of MSM theoretical guarantees extend to the HMM framework.

While the $L_1$-regularized reversible hidden Markov model represents an improvement over previous methods for analyzing MD datasets, future work will likely confront a number of remaining challenges. For example, the current model does not learn the featurization and treats $\Delta$ as a hyperparameter. Bringing these two aspects of the model into the optimization framework would reduce the required amount of manual tuning. Adapting techniques from Bayesian nonparametrics, unsupervised feature learning and linear dynamical systems may facilitate the achievement of these goals.

Our results show that structured statistical analysis of massive protein datasets is now possible. We reduce complex dynamical systems with thousands of physical degrees of freedom to simple statistical models characterized by a small number of metastable states and transition rates. The HMM framework is a tool for turning raw molecular dynamics data into scientific knowledge about protein structure, dynamics and function. Our experiments on the ubiquitin and c-Src kinase proteins extract insight that may further the state of the art in cellular biology and rational drug design.

\section*{Acknowledgments}
We thank Diwakar Shukla for kindly providing the c-Src kinase MD trajectories. B.R. was supported by the Fannie and John Hertz Foundation. G.K and V.S.P acknowledge support from the Simbios NIH Center for Biomedical Computation (NIH U54 Roadmap GM072970). V.S.P. acknowledges NIH R01-GM62868 and NSF MCB-0954714.

\bibliography{l1_hmm.bib}
\bibliographystyle{icml2014}
\end{document}